\documentclass[12pt]{article}

\usepackage{amsmath,amsfonts}
\usepackage{amssymb}
\usepackage{amsthm}

\usepackage{times,txfonts}
\usepackage{mathrsfs}
\usepackage{bbm}
\usepackage{amsmath,amssymb}
\usepackage{color}

\usepackage[titletoc]{appendix}
\usepackage[english]{babel}
\textheight 23truecm
\textwidth 17truecm
\setlength{\topmargin}{-0.5 cm}
\setlength{\oddsidemargin}{-0.5 cm}
\setlength{\evensidemargin}{-0.5cm}

\pagestyle{plain}

\makeatletter
\@addtoreset{equation}{section}
\makeatother

\def\@makecaption#1#2{\vskip\abovecaptionskip
  \sbox\@tempboxa{\small #1: #2}%
  \ifdim \wd\@tempboxa >\hsize \small #1: #2\par
  \else \global \@minipagefalse \hb@xt@\hsize{\hfil\box\@tempboxa\hfil}\fi
  \vskip\belowcaptionskip}
\textheight 23truecm
\textwidth 17truecm
\setlength{\topmargin}{-0.5 cm}
\setlength{\oddsidemargin}{-0.5 cm}
\setlength{\evensidemargin}{-0.5 cm}
\pagestyle{plain}

\newtheorem{theorem}{Theorem}[section]

\newtheorem{proposition}[theorem]{Proposition}
\newtheorem{corollary}[theorem]{Corollary}
\newtheorem{definition}[theorem]{Definition}

\theoremstyle{remark}

\topmargin=-30pt\textheight=220mm\textwidth=170mm
\oddsidemargin=-.7cm \evensidemargin=-.7cm

\def\bN{{\mathbb N}}

\newcommand\Z{\mathbb{Z}}
\renewcommand{\L}{\mathcal{L}}
\newcommand{\bbaa}{\begin{eqnarray}}
\newcommand{\eeaa}{\end{eqnarray}}
\newcommand{\mbb }[1]{\mathbb #1}
\newcommand{\nn}{\nonumber}

\def\[{\begin{eqnarray}}
\def\]{\end{eqnarray}}
\def\d{\partial}

\def\La{\Lambda}

\begin{document}

\begin{center}
{\large \bf Bosonic symmetries of the extended fermionic $(2N,2M)$-Toda hierarchy}\\[0.5cm]

{\large Chuanzhong Li }

{}~\\
\quad \\

{{Department of Mathematics, Ningbo University, Ningbo, 315211 Zhejiang, P.\ R.\ China}}\\
\em { Email: lichuanzhong@nbu.edu.cn}\\

\end{center}

\centerline{{\bf Abstract}} \noindent
In this paper,  we construct the additional symmetries of the  fermionic $(2N,2M)$-Toda hierarchy basing on the generalization of the $N{=}(1|1)$ supersymmetric two dimensional Toda lattice hierarchy.
These additional  flows constitute a $w_{\infty}\times w_{\infty}$ Lie algebra. As a Bosonic reduction of the   $N{=}(1|1)$ supersymmetric two dimensional Toda lattice hierarchy and the  fermionic $(2N,2M)$-Toda hierarchy, we define a new extended fermionic $(2N,2M)$-Toda hierarchy which admits a Bosonic Block type superconformal structure.

{\it PACS}: 02.20.Sv, 02.30.Ik, 11.30.Pb.

{\it Keywords}: $N{=}(1|1)$ supersymmetric Toda hierarchy,  extended fermionic $(2N,2M)$-Toda hierarchy; Additional symmetry; Block algebra

\tableofcontents

\pagenumbering{arabic}

\section{Introduction}

In the study of  integrable hierarchies, it  is interesting
to identify their symmetries and
the algebraic structure of the symmetries, particularly to find the additional symmetry.
Additional symmetries of the Kadomtsev-Petviashvili(KP) hierarchy
were introduced  by Orlov and Shulman \cite{os1} which contain one important symmetry called Virasoro symmetry.
 As a nonlinear evolutionary
differential-difference equation
describing an infinite system of masses on a line that interact
through an exponential force, the Toda equation\cite{Toda} was  generalized to Toda lattice hierarchy \cite{UT} which is completely integrable and has important applications in many fields such as classical and quantum field theory, in
particular in the theory of Gromov-Witten invariants (\cite{Z}).
Adding extended logarithmic flows, the Toda lattice
hierarchy was extended into the so-called extended Toda hierarchy
(\cite{CDZ}) and was conjectured and  shown (\cite{DZ},
\cite{Ge}, \cite{Z}) that the extended Toda hierarchy is a
hierarchy describing the Gromov-Witten invariants of $CP^1$ as the large $N$
limit of the $CP^1$ topological sigma model.   The extended bigraded
Toda hierarchy (EBTH) was introduced by Guido Carlet (\cite{C}) who
hoped that EBTH might also be relevant for some applications in two dimensional(2D)
topological field theory and in the theory of Gromov-Witten
invariants.
The Hirota bilinear equation of EBTH were equivalently constructed in our early paper \cite{ourJMP} and a very recent paper \cite{leurhirota}, because of the equivalence of $t_{1,N}$ flow and $t_{0,N}$ flow of the EBTH in \cite{ourJMP}. Meanwhile it was proved to govern Gromov-Witten invariant of the total
  descendent potential of $\mathbb{P}^1$ orbifolds \cite{leurhirota}. This
hierarchy also attracted a series of results from analytical and algebraic considerations
\cite{ourBlock,solutionBTH,RMP,BKP-DS,torus}.

Various generalizations and supersymmetric extensions \cite{yamada} of the KP hierarchy have
deep implications in mathematical physics, particularly in the theory of Lie algebras.  One important supersymmetric extension is the supersymmetric Manin-Radul Kadomtsev-Petviashvili
(MR-SKP) hierarchy\cite{maninsuperKP}  which contains a lot of reduced integrable super solitary equations including the Kupershmidt's super-KdV equation.
The additional symmetry of the MR-SKP hierarchy was studied by Stanciu \cite{Stanciu}.   Later the supersymmetric BKP (SBKP) hierarchy  was constructed in \cite{StanciuBKP}. In our quite recent paper\cite{NPB}, we  constructed the additional symmetries of the supersymmetric BKP hierarchy.
These additional flows constitute a B type $SW_{1+\infty}$ Lie algebra. Further we generalize the SBKP hierarchy to a supersymmetric two-component BKP hierarchy (S2BKP) hierarchy
and derive its algebraic structure. As a reduction of the S2BKP hierarchy,  a new supersymmetric Drinfeld-Sokolov hierarchy of type D was constructed and proved to have a super Block type additional symmetry. Later the Darboux transformation of the supersymmetric BKP (SBKP) hierarchy was constructed in \cite{JNMP}.

The supersymmetric Toda hierarchy was introduced in \cite{Ik}.  By introducing the Lie superalgebra
osp$(\infty|\infty)$, the ortho-symplectic supersymmetric Toda hierarchy was defined as well in \cite{Ik2}. These equations in the hierarchy were solved through
the Riemann-Hilbert decomposition of corresponding infinite dimensional Lie supergroups.
Recently, the supersymmetric Toda hierarchy was studied a lot including their hamiltonian structure, dispersionless limit and so on \cite{LSor1}-\cite{KS2}. Recently, about the application of Block algebra in KP and Toda type systems, we also prove that the constrained BKP system and Toda system all have a Block symmetry in \cite{blockds,IJGMMP}.

This paper is arranged as follows. In the next section, some fundamental notations on  the space of supersymmetric shift operators will be prepared. After this, we introduce some
necessary facts of the  $N{=}(1|1)$ supersymmetric Toda hierarchy in Section 3. In Sections 4, we will give an introduction of the fermionic $(K,S)$-Toda hierarchy and
we will give the additional symmetries for the
  fermionic $(2N,2M)$-Toda hierarchy
and derive its  $w_{\infty}\times w_{\infty}$  symmetry in Section 5.
 As a Bosonic reduction of the  fermionic $(2N,2M)$-Toda hierarchy, we define a new constrained system called the extended  fermionic $(2N,2M)$-Toda hierarchy in Section 6 and prove that it possesses a Bosonic Block type Lie symmetry in Section 7.

\section{ Supersymmetric shift operators}
 In this section, we will recall an introduction about the space of supersymmetric shift operators following a series of papers of Sorin's group, for example the reference \cite{KS2}. Firstly  the space of difference operators
 can be represented in the following general form
 \bbaa  \label{O}
\mbb{F}_m=\sum\limits_{k=-\infty}^{\infty}f^{(m)}_{k}(x)\ \Lambda^{k-m}, \ \ \  \ \Lambda=e^{\epsilon\partial},\ \ \ m, k \in \mathbb{Z},
 \eeaa
where  the parameter $\epsilon$ in the class of difference operators is the string coupling constant and
 $\Lambda$ is the shift operator whose action on the continuous spatial  fields
as a shift of a lattice index $\epsilon$
 \bbaa \label{shift}
 \Lambda^{l}f^{(m)}_{k}(x)=f^{(m)}_{k}(x+l\epsilon)\Lambda^{l}.
 \eeaa
 The functions $f_{2k}(x)^{(m)}$  ($f_{2k+1}(x)^{(m)}$) are
the  bosonic (fermionic) lattice fields and the Grassmann $Z_2$-parity can be defined
\bbaa \label{f-grad} d_{f^{(m)}_{k}(x)}=|k|\ \mbox{mod}\ 2,\ \  d_{\Lambda^l}=|l|\ \mbox{mod} \ 2.\nn \eeaa
The operators  $ \mbb{F}_m$ in \eqref{O} admit the diagonal $Z_2$-parity
\bbaa\label{Z2-par}
d_{\mbb{F}_m}=d_{f_{k}(x)^{(m)}}+d_{\Lambda^{k-m}}=|m|\
\mbox{mod} \ 2 .\eeaa

The involution $*$ can be defined as \cite{Ik}
\begin{eqnarray}
{\mbb{F}}^{*}_m=\sum^{\infty}_{k=-\infty} (-1)^k
f^{(m)}_{k}(x)\Lambda^{k-m}. \label{invtoda}
\end{eqnarray}

 In the follows we need the projections
 of the operators $\mbb{F}_m$ defined as

\bbaa  ( \mbb{F}_m)_{\leqslant p}=\sum\limits_{k\leqslant
p+m}f_{k}(x)\ \Lambda^{k-m},\ \ \ \
 (\mbb{F}_m)_{\geqslant p}=\sum\limits_{k\geqslant p+m}f_{k}(x)\
 \Lambda^{k-m},\nn
 \eeaa

and we will use the usual notation for the projections
$(\mbb{F}_m)_+:=(\mbb{F}_m)_{\geqslant 0}$ and
$(\mbb{F}_m)_-:=(\mbb{F}_m)_{< 0}$.

 The generalized graded  algebra on these
subspaces will be defined with the following bracket
\cite{KS2}
\bbaa\label{SK-bracket}  [{\mathbb F}, \widetilde{ \mathbb F}
\}:={\mathbb F}\ \widetilde{\mathbb F}
 - (-1)^
 {d_{{\mathbb F}
 \vphantom{\widetilde{\mathbb F}} }
 d_{\widetilde{\mathbb F}}}~{\widetilde{\mathbb F}}^{*(d_{{\mathbb F}\vphantom{\widetilde{\mathbb F}}})}~ {{\mathbb F}}^{*(d_{\widetilde{\mathbb F}})}, \eeaa
where $\mathbb F^{*(j)}$ denotes the j-fold action of the involution $*$ on the operator $\mathbb F$ with $\mathbb F^{*(2)}=\mathbb F.$ The bracket \eqref{SK-bracket} generalizes  the
(anti)commutator in superalgebras and satisfies the following
properties \cite{KS2} including
symmetry, derivation  and  Jacobi identity as
 \bbaa \label{symSK} [ {\mathbb F}, \widetilde{ \mathbb F}
\}=
 - (-1)^
 {d_{{\mathbb F}
 \vphantom{\widetilde{\mathbb F}} }
 d_{\widetilde{\mathbb F}}}~[{\widetilde{\mathbb F}}^{*(d_{{\mathbb F}\vphantom{\widetilde{\mathbb F}}})},  {{\mathbb F}}^{*(d_{\widetilde{\mathbb F}})}\}, \eeaa

 \bbaa \label{derSK}
 [{\mathbb F}, \widetilde{ \mathbb F}\, \widehat{ \mathbb F}
\}=[ {\mathbb F}, \widetilde{ \mathbb F}\}\, \widehat{ \mathbb F}
 + (-1)^{d_{{\mathbb F}
 \vphantom{\widetilde{\mathbb F}} } d_{\widetilde{\mathbb F}}}\,
 {\widetilde{\mathbb F}}^{*(d_{{\mathbb F}\vphantom{\widetilde{\mathbb F}}})} [{\mathbb F}^{*(d_{\widetilde{\mathbb F}})}, {\widehat{\mathbb F}}\},\eeaa

\bbaa\label{SK-Jacobi} (-1)^{d_{{\mathbb F}
 \vphantom{\widetilde{\mathbb F}} } d_{\widehat{\mathbb F}\vphantom{\widetilde{\mathbb
 O}}}}\,
 [[{{\mathbb F}},\, {\widetilde{\mathbb F}}^{*(d_{{\mathbb F}\vphantom{\widetilde{\mathbb F}}})}\},\,  {\widehat{\mathbb F}}^{*(d_{{\mathbb F}\vphantom{\widetilde{\mathbb F}}}+
d_{\widetilde{\mathbb F}}) \vphantom{\widetilde{\mathbb F}}}\}
 +(-1)^{d_{\widetilde{{\mathbb F}}
 \vphantom{\widetilde{\mathbb F}} } d_{{\mathbb F}\vphantom{\widetilde{\mathbb
 O}}}}\,
 [[{\widetilde{\mathbb F}},\, {\widehat{\mathbb F}}^{*(d_{\widetilde{{\mathbb F}}\vphantom{\widetilde{\mathbb F}}})}\},\,
 {{\mathbb F}}^{*(d_{\widetilde{{\mathbb F}}\vphantom{\widetilde{\mathbb F}}}+d_{\widehat{\mathbb F}}) \vphantom{\widetilde{\mathbb F}}}\}&&\nn\\
 +\ \  (-1)^{d_{\widehat{\mathbb F}
 \vphantom{\widetilde{\mathbb F}} } d_{\widetilde{\mathbb F}\vphantom{\widetilde{\mathbb
 O}}}}\,
[[{\widehat{\mathbb F}},\, {{\mathbb F}}^{*(d_{\widehat{\mathbb F}\vphantom{\widetilde{\mathbb F}}})}\},\,
 {\widetilde{\mathbb F}}^{*(d_{\widehat{{\mathbb F}}\vphantom{\widetilde{\mathbb F}}}+d_{{\mathbb F}}) \vphantom{\widetilde{\mathbb F}}}\}
 &=&0.
\eeaa

Same as \cite{Ik},   the operator $\mathbb F^k_*$ is defined as following

\bbaa
\mathbb F^{2k}_*=(\mathbb F^{*(d_\mathbb F)}\mathbb F)^k,\ \ \mathbb F^{2k+1}_*=\mathbb F\mathbb F^{2k}_*.\eeaa

With the above preparation, the $N{=}(1|1)$ supersymmetric 2D Toda lattice hierarchy will be  introduced as \cite{superbigraded} in the next section.

\section{$N{=}(1|1)$ supersymmetric 2D Toda lattice hierarchy}

The Lax operators $L,\bar L$ of the $N{=}(1|1)$ supersymmetric 2D Toda lattice hierarchy belong to
the space of operators
\begin{eqnarray}
L=\sum^{\infty}_{k=0} u_{k}(x)\Lambda^{1-k},\quad
u_{0}(x)= 1, \quad \bar L=\sum^{\infty}_{k=0}
v_{k}(x)\Lambda^{k-1}, \quad v_{0}(x)\neq 0 \label{laxs1}
\end{eqnarray}
and have the grading $d_{ L}=d_{\bar L}=1$. The coefficient functions in the lax operators depend on infinitely many
dependent variables.
Then the Lax operators $L,\bar L$ can have the following dressing structures
\[L=W^*\Lambda W^{-1},\ \ \bar L=\bar W^*\Lambda^{-1} \bar W^{-1},\]
where
\[W=\sum^{\infty}_{k=0} W_{k}(x)\Lambda^{-k},\quad
W_{0}(x)= 1,\ \  \ \bar W=\sum^{\infty}_{k=0}\bar W_{k}(x)\Lambda^{k}. \]
Here $d_{W}=d_{\bar W}=0\ \mbox{mod} \ 2 .$
Then the Lax operators $L_*^{2n},\bar L_*^{2m}$ can have the following dressing structures
\[L_*^{2n}&:=&(L^*L)^{n}=(W\Lambda W^{*-1}W^*\Lambda W^{-1})^n=W\Lambda^{2n} W^{-1},\\
 \ \bar L_*^{2m}&:=&(\bar L^*\bar L)^m=(\bar W\Lambda^{-1} \bar W^{*-1}\bar W^*\Lambda^{-1} \bar W^{-1})^m=\bar W\Lambda^{-2m} \bar W^{-1},\]
and the Lax operators $L_*^{2n+1},\bar L_*^{2m+1}$ can have the following dressing structures
\[L_*^{2n+1}&:=&L(L^*L)^{n}=W^*\Lambda^{2n+1} W^{-1},\\
 \ \bar L_*^{2m+1}&:=&\bar L(\bar L^*\bar L)^m=\bar W^*\Lambda^{-2m-1} \bar W^{-1}.\]

The Lax
representation of the $N{=}(1|1)$ supersymmetric 2D Toda lattice hierarchy in terms of bracket
operations is as following \cite{superbigraded}
\begin{eqnarray}
D_n L&=& - (-1)^n
\Bigr[{((L^{n}_{*})}_{- })^{*},
L\Bigl\}, \quad n \in {\bN},
\label{laxreprs1}
\end{eqnarray}
\begin{eqnarray}
D_n \bar L&=&  (-1)^n
\Bigr[{((L^{n}_{*})}_{+ })^{*},
\bar L\Bigl\},  \quad n \in {\bN},
\label{laxreprs12}
\end{eqnarray}
\begin{eqnarray}
\bar D_n L&=&  -  (-1)^n
\Bigr[{((\bar L^{n}_{*})}_{- })^{*},
L\Bigl\}, \quad n \in {\bN},
\label{laxreprs21}
\end{eqnarray}
\begin{eqnarray}
\bar D_n \bar L&=& (-1)^n
\Bigr[{((\bar L^{n}_{*})}_{+ })^{*},
\bar L\Bigl\},  \quad n \in {\bN},
\label{laxreprs2}
\end{eqnarray}
where $ D_{2n}$ ($ D_{2n+1}$) and $\bar D_{2n}$ ($\bar D_{2n+1}$) are bosonic (fermionic) evolution
derivatives.

The Lax equations  generate a non-Abelian algebra of flows
of the $N{=}(1|1)$ supersymmetric 2D Toda lattice hierarchy,

\begin{eqnarray}
[D_{n}~,~\bar D_{l}\}&=&[ D_{n}~,~ D_{2l}\}=[\bar D_{n}~,~\bar D_{2l}\}=0,\\
 \{ D_{2n+1}~,~ D_{2l+1}\}&=&2 D_{2(n+l+1)},\ \
\{\bar D_{2n+1}~,~\bar D_{2l+1}\}=2\bar D_{2(n+l+1)},
\label{algebras1}
\end{eqnarray}
which can be realized as
\begin{eqnarray}
 D_{2n} ={ \partial}_{2n}, \quad  D_{2n+1}
={ \partial}_{2n+1}+
\sum^{\infty}_{l=1} t_{2l-1}{ \partial}_{2(n+l)},\   \d_{m}=\d_{t_m},
\end{eqnarray}
\begin{eqnarray}
\bar D_{2n} ={\bar \partial}_{2n}, \quad \bar D_{2n+1}
={\bar \partial}_{2n+1}+
\sum^{\infty}_{l=1}\bar t_{2l-1}{\bar \partial}_{2(n+l)},\  \bar \d_{m}=\d_{\bar t_m},
\label{covder}
\end{eqnarray}
where $ t_{2n}$ ($ t_{2n+1}$), $\bar t_{2n}$ ($\bar t_{2n+1}$)
are bosonic (fermionic) evolution times.

Similarly to the Lax operators $L,\bar L$, the operators
$L^{n}_{*},\bar L^{n}_{*}$  can be represented as
\begin{eqnarray}
L^{m}_{*}:=\sum^{\infty}_{k=0}
u^{(m)}_{k}(x)\Lambda^{m-k}, \quad u^{(m)}_{0}(x)=1, \quad
\bar L^{m}_{*}:=\sum^{\infty}_{k=0}
v^{(m)}_{k}(x)\Lambda^{k-m}. \label{laxs2}
\end{eqnarray}
Here $u^{(m)}_{k}(x)$ and $v^{(m)}_{k}(x)$  are functionals of the original
fields $\{u_{k}(x),~v_{k}(x)\}$. The  $Z_2$-grading of the operator
$L^{n}_{*}, \bar L^{n}_{*}$ has the form $d_{ L^{2n}_{*}}=d_{\bar L^{2n}_{*}}=0$ and
$d_{ L^{2n+1}_{*}}=d_{\bar L^{2n+1}_{*}}=1$. This agrees with another $Z_2$-grading
$d_{ D_{2n}}=d_{\bar D_{2n}}=0$ and $d_{ D_{2n+1}}=d_{\bar D_{2n+1}}=1$
that corresponds to the statistics of the evolution derivatives
$D_{n},\bar D_{n}$.

Using  the definitions of $L^{n}_{*},\bar L^{n}_{*}$, we can easily obtain
the following useful identities \cite{superbigraded}
\begin{eqnarray}
&&\Bigr[ L^{2n}_{*},
L^{2m}_{*}\Bigl\}=0, \nonumber\\
&&\Bigr[ (L^{2n}_{*})^{*},
L^{2m+1}_{*}\Bigl\}=0, \quad \Bigr[
L^{2n+1}_{*},
L^{2m}_{*}\Bigl\}=0, \nonumber\\
&&\Bigr[ (L^{2n+1}_{*})^{*},
L^{2m+1}_{*}\Bigl\}=2L^{2(n+m+1)}_{*},
\end{eqnarray}
\begin{eqnarray}
&&\Bigr[ \bar L^{2n}_{*},
\bar L^{2m}_{*}\Bigl\}=0, \nonumber\\
&&\Bigr[ (\bar L^{2n}_{*})^{*},
\bar L^{2m+1}_{*}\Bigl\}=0, \quad \Bigr[
\bar L^{2n+1}_{*},
\bar L^{2m}_{*}\Bigl\}=0, \nonumber\\
&&\Bigr[ (\bar L^{2n+1}_{*})^{*},
\bar L^{2m+1}_{*}\Bigl\}=2\bar L^{2(n+m+1)}_{*}.
\label{ident}
\end{eqnarray}
Next, using (\ref{laxreprs1}--\ref{laxreprs2}), one can derive dynamical equations  for
the operators $ L^{n}_{*},\bar L^{n}_{*}$ \cite{superbigraded},
\begin{eqnarray}
 D_n L^{m}_{*}&=& -  (-1)^{nm} \Bigr[
{(( L^{n}_{*})}_{-})^{*(m)},
 L^{m}_{*}\Bigl\}, \label{laxreprs2m}
\end{eqnarray}
\begin{eqnarray}
 D_n \bar L^{m}_{*}&=&  (-1)^{nm} \Bigr[
{(( L^{n}_{*})}_{+})^{*(m)},
\bar L^{m}_{*}\Bigl\}, \label{laxreprs2m}
\end{eqnarray}
\begin{eqnarray}
\bar D_n L^{m}_{*}&=&   -(-1)^{nm} \Bigr[
{((\bar L^{n}_{*})}_{- })^{*(m)},
L^{m}_{*}\Bigl\}, \label{laxreprs22m}
\end{eqnarray}
\begin{eqnarray}
\bar D_n (\bar  L)^{m}_{*}&=&   (-1)^{nm} \Bigr[
{((\bar L^{n}_{*})}_{+ })^{*(m)},
\bar L^{m}_{*}\Bigl\}. \label{laxreprs22m}
\end{eqnarray}

The supersymmetric $N{=}(1|1)$ supersymmetric 2D Toda lattice equation
\begin{eqnarray}
D_1\bar D_1 \ln v_{0}(x)= v_{0}(x+\epsilon) - v_{0}(x-\epsilon) \label{todabs}
\end{eqnarray}
belongs to system of equations (\ref{laxreprs1}--\ref{laxreprs2}).

\section{ 2D   fermionic $(K,S)$-Toda lattice hierarchy}
 In this section, we construct the
two-dimensional  fermionic $(K,S)$-Toda lattice hierarchy in
terms of the Lax-pair representation similarly as the  fermionic $(K,S)$-Toda lattice hierarchy in \cite{superbigraded}.

 Let us consider two difference operators
 \bbaa\label{LaxOp} L_{K}=L^{K}=\sum\limits_{k=0}^\infty u_{Kk}(x)
\Lambda^{K-k}, \ \ \ \ \ \ \
\bar L_{S}=\bar L^{S}=\sum\limits_{k=0}^\infty v_{Sk}(x) \Lambda^{k-S},
 \eeaa
 which have the Grassmann parities as $d_{L_{K}}=K,\ \ d_{\bar L_{S}}=S.$
 If $K=2N, S=2M$, then the Lax operators $L^{2N},\bar L^{2M}$ can have the following dressing structures
\[L_{K}=L^{2N}&:=&(L^*L)^{N}=W\Lambda^{2N} W^{-1},\\
 \ L_{S}=\bar L^{2M}&:=&(\bar L^*\bar L)^M=\bar W\Lambda^{-2M} \bar W^{-1}.\]

The dynamics of the fields $u_{Kk}(x),v_{Sk}(x)$ are governed by the
Lax equations expressed in terms of the generalized graded bracket
\eqref{SK-bracket} \cite{KS2}
 \bbaa\label{LaxEqCom}
D_s(L_K)^r&=&-(-1)^{sr K }[(((L_K)^s)_-)^{*(rK)},(L_K)^r\},\nn\\
D_s(\bar L_S)^r&=&(-1)^{s r K S }[(((L_K)^s)_+)^{*(rS)},(\bar L_S)^r\},\nn\\
\bar D_s(L_K)^r&=&-(-1)^{sr K S}[(((\bar L_S)^s)_-)^{*(rK)},(L_K)^r\},\nn\\
\bar D_s(\bar L_S)^r&=&(-1)^{srS}[(((\bar L_S)^s)_+)^{*(rS)},(\bar L_S)^r\},
   \ \ \ \ \ s \in \mathbb N.
 \eeaa
The $Z_2$-parity of two kinds of derivatives is
defined as
$$ d_{D_s}=sK\ \mbox{ mod } \ 2,\ \
\ \ \ \ \ d_{\bar D_s}=sS\ \mbox{mod } \ 2.
 $$
The Lax equations  generate a non-Abelian  super algebra
of flows of the 2D  fermionic $(K,S)$- Toda lattice hierarchy
\bbaa [D_s,D_p\}=(1-(-1)^{spK})D_{s+p},\ \ \ \
\bigl[\bar D_s,\bar D_p\}   =  (1-(-1)^{spS})\bar D_{s+p},\ \ \ \
 \bigl[ D_s,\bar D_p\} =   0.\nn
 \eeaa
If we
introduce the notation
$v_{S0}(x)=\alpha(x),v_{S1}(x)=\rho(x),u_{K1}(x)=\gamma(x), u_{K2}(x)=\beta(x)$ and
 consider eqs. (\ref{LaxEqCom}) at  $K=S=2$, $r=s=1$.
One obtains
\bbaa\label{2dToda} &&D_1\alpha(x)=\alpha(x)(\beta(x)-\beta(x-2\epsilon)),\ \ \ \ \ \
\bar D_1\gamma(x)=\rho(x)u_0(x-\epsilon)-\rho(x+2\epsilon)u_{0}(x),
\nn \\
&& \bar D_1\beta(x)=\alpha(x)u_0(x-2\epsilon) -\alpha(x+2\epsilon)u_{0}(x)-\gamma(x)
\rho(x+\epsilon)-\gamma(x-\epsilon)\rho(x),\nn\\
&& D_1\rho(x)=\rho(x)(\beta(x)-\beta(x-\epsilon))+\alpha(x+\epsilon)
\gamma(x)-\alpha(x)\gamma(x-2\epsilon),\ \bar D_1u_{K0}(x)=0. \eeaa
It is easy to check that after a reduction $u_{K0}(x)=1$ eqs.(\ref{2dToda}) will become the 2D generalized  fermionic Toda
lattice equations as

\bbaa\label{2dToda2} &&D_1\alpha(x)=\alpha(x)(\beta(x)-\beta(x-2\epsilon)),\ \ \ \ \ \
\bar D_1\gamma(x)=\rho(x)-\rho(x+2\epsilon),
\nn \\
&& \bar D_1\beta(x)=\alpha(x) -\alpha(x+2\epsilon)-\gamma(x)
\rho(x+\epsilon)-\gamma(x-\epsilon)\rho(x),\nn\\
&& D_1\rho(x)=\rho(x)(\beta(x)-\beta(x-\epsilon))+\alpha(x+\epsilon)
\gamma(x)-\alpha(x)\gamma(x-2\epsilon). \eeaa

\subsection{ 2D   fermionic $(2N,2M)$-Toda hierarchy}
When $K=2N,S=2M$, then the Lax operators $L_{2N},\bar L_{2M}$ will be defined to satisfy the following specific flow equations of the 2D   fermionic $(2N,2M)$-Toda hierarchy
\begin{eqnarray}
 D_{2nN} L_{2N}&=& - \Bigr[
(L^{n}_{2N})_-,
 L_{2N}\Bigl\}, \ \
 D_{2nN} \bar L_{2M}=   \Bigr[
( L_{2N}^{n})_{+},
\bar L_{2M}\Bigl\}, \\
\bar D_{2nM} L_{2N}&=&   -\Bigr[
(\bar L_{2M}^n)_{- },
L_{2N}\Bigl\}, \ \
\bar D_{2nM} \bar  L_{2M}=    \Bigr[
(\bar L^{n}_{2M})_+,
\bar L_{2M}\Bigl\}. \label{laxreprs22}
\end{eqnarray}

In the next section, we will study on the  additional
symmetries of this  2D   fermionic $(2N,2M)$-Toda hierarchy.

\section{Additional symmetry of 2D   fermionic $(2N,2M)$-Toda hierarchy}

In this section, we will consider the construction of the flows of additional
symmetries of the  2D   fermionic $(2N,2M)$-Toda hierarchy.

Firstly, we introduce Orlov-Schulman operators as following
\begin{eqnarray}\label{Moperator}
&&M_N=W\Gamma_N W^{-1}, \ \ \bar M_M=\bar W\bar \Gamma_M \bar W^{-1},\ \\
 &&\Gamma_N=
\frac{x}{2N\epsilon}\Lambda^{-2nN}+\sum_{n\geq 0}
n\Lambda^{2N(n-1)}t_{2nN},\ \ \bar \Gamma_M=
\frac{-x}{2M\epsilon}\Lambda^{2nM}-\sum_{n\geq 0}
n\Lambda^{-2M(n-1)}\bar t_{2nM}.
\end{eqnarray}

Then one can prove the Lax operator $L_{2N},\bar L_{2M}$  and Orlov-Schulman operators $M_N,\bar M_M$ all have the even Grassmann parity and satisfy the following theorem.
\begin{proposition}\label{flowsofM}
The Lax operators $L_{2N},\bar L_{2M}$ and Orlov-Schulman operators $M_N,\bar M_M$ of the   fermionic $(2N,2M)$-Toda hierarchy
satisfy the following
\begin{eqnarray}
[L_{2N},M_N\}&=&1,[\bar L_{2M},\bar M_M\}=1,\\ \label{Mequation}
D_{2nN}(M_N)^r&=&[((L_{2N})^{n})_+,(M_N)^r\},\nn\\
D_{2nN}(\bar M_M)^r&=&[((L_{2N})^{n})_+,(\bar M_M)^r\},\nn\\
\bar D_{2nM}(M_N)^r&=&-[((\bar L_{2M})^{n})_-,(M_N)^r\},\nn\\
\bar D_{2nM}(\bar M_M)^r&=&-[((\bar L_{2M})^{n})_-,(\bar M_M)^r\},
   \ \ \ \ \ n \in \mathbb N.
 \end{eqnarray}
\end{proposition}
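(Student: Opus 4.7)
The plan is to leverage the dressing presentations $L_{2N}=W\Lambda^{2N}W^{-1}$, $M_N=W\Gamma_N W^{-1}$, $\bar L_{2M}=\bar W\Lambda^{-2M}\bar W^{-1}$, $\bar M_M=\bar W\bar\Gamma_M\bar W^{-1}$, reducing each statement to an elementary computation on the bare operators $\Lambda^{\pm 2K},\Gamma_N,\bar\Gamma_M$ together with Sato-type evolution equations for $W$ and $\bar W$. All four operators $L_{2N},\bar L_{2M},M_N,\bar M_M$ carry even $\mathbb{Z}_2$-grading, so by the definition \eqref{SK-bracket} the generalized bracket $[\cdot,\cdot\}$ restricts to the ordinary commutator on them, and conjugation preserves that commutator: $[WAW^{-1},WBW^{-1}]=W[A,B]W^{-1}$.

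With this reduction in hand, the canonical identity $[L_{2N},M_N\}=1$ follows from $[\Lambda^{2N},\Gamma_N]=1$. In $\Gamma_N$, the sum $\sum_{n\geq 0}n\,t_{2nN}\Lambda^{2N(n-1)}$ commutes with $\Lambda^{2N}$ on the nose, so only the first term contributes, and the shift rule $\Lambda^{2N}x=(x+2N\epsilon)\Lambda^{2N}$ gives
\begin{equation*}
\Bigl[\Lambda^{2N},\tfrac{x}{2N\epsilon}\Lambda^{-2N}\Bigr]=\tfrac{x+2N\epsilon}{2N\epsilon}-\tfrac{x}{2N\epsilon}=1.
\end{equation*}
The identity $[\bar L_{2M},\bar M_M\}=1$ is symmetric, using $\Lambda^{-2M}x=(x-2M\epsilon)\Lambda^{-2M}$.

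For the evolution equations I would first extract Sato-type identities from the Lax equations \eqref{laxreprs22}. Differentiating $L_{2N}=W\Lambda^{2N}W^{-1}$ and matching with $D_{2nN}L_{2N}=-[(L_{2N}^n)_-,L_{2N}\}$ forces $D_{2nN}(W)W^{-1}=-(L_{2N}^n)_-$, and analogously $D_{2nN}(\bar W)\bar W^{-1}=(L_{2N}^n)_+$, $\bar D_{2nM}(W)W^{-1}=-(\bar L_{2M}^n)_-$, $\bar D_{2nM}(\bar W)\bar W^{-1}=(\bar L_{2M}^n)_+$. Since $D_{2nN}\Gamma_N=n\Lambda^{2N(n-1)}$, a direct computation yields
\begin{equation*}
D_{2nN}M_N=[D_{2nN}(W)W^{-1},M_N]+nW\Lambda^{2N(n-1)}W^{-1}=-[(L_{2N}^n)_-,M_N]+nL_{2N}^{n-1}.
\end{equation*}
The canonical bracket $[L_{2N},M_N]=1$ implies $[L_{2N}^n,M_N]=nL_{2N}^{n-1}$, so the right-hand side collapses to $[(L_{2N}^n)_+,M_N]$. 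The equation for $D_{2nN}\bar M_M$ is simpler because $\bar\Gamma_M$ does not involve $t_{2nN}$, so only the Sato term contributes, producing $[(L_{2N}^n)_+,\bar M_M]$ directly. The two $\bar D_{2nM}$-flows work symmetrically, with $\bar D_{2nM}\bar\Gamma_M=-n\Lambda^{-2M(n-1)}$ generating the minus signs that convert $+$-projections into $-$-projections in the statement. Promotion from $M_N$ to $(M_N)^r$ is then immediate by Leibniz, legitimate because the grading is even throughout.

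The main obstacle is the Sato step itself: one must rule out the a priori ambiguity in expressions such as $D_{2nN}(W)W^{-1}+(L_{2N}^n)_-$, which is only known to commute with $L_{2N}$. Eliminating it requires the standard order-of-$\Lambda$ argument within the supersymmetric shift-operator algebra of Section~2 and must be carried out separately for each of the four dressing operators. Once these Sato identities are established, the rest of the proposition is a formal manipulation driven entirely by the canonical brackets computed above.
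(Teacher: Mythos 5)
Your proof is correct and follows essentially the same route as the paper: the paper's own (very terse) proof likewise dresses the vanishing bare brackets $[D_{t_{2nN}}-\Lambda^{2nN},\Gamma_N\}=0$ and $[D_{t_{2nN}},\bar\Gamma_M\}=0$ by $W$ and $\bar W$, which is exactly your Leibniz computation with $[L_{2N}^{n},M_N\}=nL_{2N}^{n-1}$ packaged into a single conjugation. Your write-up is simply more explicit, in particular about deriving the Sato equations from the Lax equations and about the residual ambiguity there, a step the paper leaves implicit.
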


\begin{proof}
One can prove the proposition by dressing the following several commutative Lie brackets
\begin{eqnarray*}&&[ D_{ t_{2nN}}-\Lambda^{2nN},\Gamma_N\}\\
&=&[ D_{ t_{2nN}}-\Lambda^{2nN},\frac{x}{2N\epsilon}\Lambda^{-2N}+\sum_{n\geq 0}
n\Lambda^{2N(n-1)}t_{n}\}\\&=&0,
\end{eqnarray*}

\begin{eqnarray*}&&[ D_{ t_{2nN}},\bar \Gamma_M\}=[ D_{ t_{2nN}},\frac{-x}{2M\epsilon}\Lambda^{2nM}\}=0.
\end{eqnarray*}
The other identities can be proved in similar ways.
\end{proof}

For the
additional
symmetries of the   fermionic $(2N,2M)$-Toda hierarchy, we introduce additional
independent variables $t_{m,l},\bar t_{m,l}$ and define the actions of the
additional flows on the wave operators as
\begin{eqnarray}
\epsilon D_{t_{m,l}}W&=&-(M_N^mL_{2N}^l)_{-}W, \ \ \ \epsilon D_{\bar t_{m,l}}
\bar W=(\bar M_M^m\bar L_{2M}^l)_{+}\bar W,\\
\epsilon D_{t_{m,l}}\bar W&=&(M_N^mL_{2N}^l)_{+}\bar W, \ \ \ \epsilon D_{\bar t_{m,l}}
 W=-(\bar M_M^m\bar L_{2M}^l)_{-} W,
\end{eqnarray}
where $m\geq 0, l\geq 0$.

By considering the well-known $w_{\infty}$ algebra with algebraic coefficients $C_{\alpha\beta}^{(ps)(ab)}$ as
\begin{eqnarray*}
&&[z^s\partial^p,z^b\partial^a]=\sum_{\alpha\beta}C_{\alpha\beta}^{(ps)(ab)}z^{\beta}\partial_{\alpha},
\end{eqnarray*}
we can prove the above additional flows are in fact symmetries of the  fermionic $(2N,2M)$-Toda hierarchy whose structure is a $w_{\infty}\times w_{\infty}$ algebra as in the following theorem.
\begin{theorem}\label{WinfiniteCalgebra1}
The additional flows  $D_{t_{m,l}}, D_{\bar t_{m,l}}$ of the  fermionic $(2N,2M)$-Toda hierarchy form a  $w_{\infty}\times w_{\infty}$ Lie algebra with the
following relation
 \begin{eqnarray}\label{algebra relation}
[D_{t_{m,l}},D_{t_{n,k}}\}&=& \sum_{\alpha\beta}C_{\alpha\beta}^{(ml)(nk)}D_{ t_{\alpha,\beta}},\end{eqnarray}
\begin{eqnarray}[D_{\bar t_{m,l}},D_{\bar t_{n,k}}\}&=& \sum_{\alpha\beta}C_{\alpha\beta}^{(ml)(nk)}D_{\bar t_{\alpha,\beta}},
\end{eqnarray}
\begin{eqnarray}[D_{ t_{m,l}},D_{\bar t_{n,k}}\}&=& 0,
\end{eqnarray}
which holds in the sense of acting on  $W$, $\bar W$ or $L_{2N},\bar L_{2M}$ and  $m,n,l,k\geq 0.$
\end{theorem}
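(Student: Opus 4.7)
My plan is to reduce each of the three commutation relations to an algebraic identity in the Weyl-type algebra generated by $(L_{2N},M_N)$ (respectively $(\bar L_{2M},\bar M_M)$), exploiting the Heisenberg relations $[L_{2N},M_N\}=1$ and $[\bar L_{2M},\bar M_M\}=1$ given by Proposition~\ref{flowsofM}. Throughout I abbreviate $A:=M_N^m L_{2N}^l$, $\tilde A:=M_N^n L_{2N}^k$, $\bar B:=\bar M_M^n\bar L_{2M}^k$, and rescale $D'_{*}:=\epsilon D_{*}$. Since all operators in sight are Grassmann-even, the graded bracket $[\cdot,\cdot\}$ reduces to the ordinary commutator and the Leibniz rule carries no signs.

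First I would differentiate the dressings $L_{2N}=W\Lambda^{2N}W^{-1}$, $M_N=W\Gamma_N W^{-1}$, $\bar L_{2M}=\bar W\Lambda^{-2M}\bar W^{-1}$, $\bar M_M=\bar W\bar\Gamma_M\bar W^{-1}$ to obtain $D'_{t_{m,l}}L_{2N}=-[A_-,L_{2N}]$, $D'_{t_{m,l}}\bar L_{2M}=[A_+,\bar L_{2M}]$ (with analogous formulas for $M_N,\bar M_M$), and $D'_{\bar t_{n,k}}L_{2N}=-[\bar B_-,L_{2N}]$, $D'_{\bar t_{n,k}}\bar L_{2M}=[\bar B_+,\bar L_{2M}]$. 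By Leibniz these extend to $D'_{t_{m,l}}\tilde A=-[A_-,\tilde A]$, $D'_{t_{m,l}}\bar B=[A_+,\bar B]$, and $D'_{\bar t_{n,k}}A=-[\bar B_-,A]$. Substituting into $[D'_{t_{m,l}},D'_{t_{n,k}}]W$ and using the elementary projection identities $[X_+,Y_+]_-=0$ and $[X_-,Y_-]_+=0$, the expression telescopes to
\begin{equation*}
[D'_{t_{m,l}},D'_{t_{n,k}}\}W=[A,\tilde A]_-W.
\end{equation*}
The relation $[L_{2N},M_N\}=1$ identifies the subalgebra generated by $L_{2N},M_N$ with differential operators via $L_{2N}\leftrightarrow\partial$, $M_N\leftrightarrow z$, so $[A,\tilde A]=\sum_{\alpha,\beta}C^{(ml)(nk)}_{\alpha\beta}M_N^\alpha L_{2N}^\beta$ with the $w_\infty$ structure constants; projecting onto $(-)_-$ and matching with $D'_{t_{\alpha,\beta}}W=-(M_N^\alpha L_{2N}^\beta)_-W$ yields the first $w_\infty$ relation. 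The parallel computation on $\bar W$ produces $[A,\tilde A]_+\bar W$ with the \emph{same} structure constants, confirming that the identity holds as a genuine flow identity; the $\bar t$-sector follows by the symmetric argument $(L_{2N},M_N,W)\to(\bar L_{2M},\bar M_M,\bar W)$.

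For the mixed commutator I would compute $[D'_{t_{m,l}},D'_{\bar t_{n,k}}]W$ directly. Substituting the induced actions derived above, the commutator collapses to
\begin{equation*}
[D'_{t_{m,l}},D'_{\bar t_{n,k}}\}W=-[A_+,\bar B]_-W-[\bar B_-,A]_-W-[A_-,\bar B_-]W,
\end{equation*}
and decomposing each projected bracket via the $\pm$-splitting, together with $[A_+,\bar B_+]_-=0$ and $[A_-,\bar B_-]_+=0$, produces a clean pairwise cancellation leaving zero. Repeating the identical calculation on $\bar W$ gives $[D_{t_{m,l}},D_{\bar t_{n,k}}\}=0$ as a flow identity, and consequently on $L_{2N}$ and $\bar L_{2M}$ via the dressings.

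The main technical obstacle is the four-fold sign-and-projection bookkeeping in the cross-commutator computation, where every piece has to annihilate precisely against its partner. A subtler point is ensuring the $w_\infty$ structure constants recovered by acting on $W$ coincide exactly with those obtained on $\bar W$, so that $[D_{t_{m,l}},D_{t_{n,k}}\}$ genuinely defines a single flow of the hierarchy rather than two a priori different ones.
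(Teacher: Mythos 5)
The paper does not actually prove this theorem: its entire ``proof'' is the sentence that the result can be shown as in Adler--Shiota--van Moerbeke, and the argument is skipped. Your proposal therefore supplies precisely the computation the paper is waving at, and it is essentially correct: the reduction of the graded bracket to the ordinary commutator is legitimate because the paper has already noted that $L_{2N},\bar L_{2M},M_N,\bar M_M$ are all Grassmann-even; the telescoping on $W$ does give $\epsilon^2[D_{t_{m,l}},D_{t_{n,k}}\}W=[A,\tilde A]_-W$ after using $[X_+,Y_+]_-=0$ and $[X_-,Y_-]_+=0$; the mixed commutator does cancel pairwise exactly as you write, using $\epsilon D_{t_{m,l}}\bar B=[A_+,\bar B\}$ versus $\epsilon D_{\bar t_{n,k}}A=-[\bar B_-,A\}$; and the Heisenberg relation $[L_{2N},M_N\}=1$ is the right mechanism for identifying the structure constants with those of $w_\infty$. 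One concrete slip: the parallel computation on $\bar W$ yields $-[A,\tilde A]_+\bar W$, not $+[A,\tilde A]_+\bar W$. With your stated sign the coefficients read off from $W$ (via $\epsilon D_{t_{\alpha,\beta}}W=-(M_N^\alpha L_{2N}^\beta)_-W$) and from $\bar W$ (via $\epsilon D_{t_{\alpha,\beta}}\bar W=+(M_N^\alpha L_{2N}^\beta)_+\bar W$) would disagree by a sign; with the corrected sign both give the same multiple of $\epsilon D_{t_{\alpha,\beta}}$, which is exactly the consistency check you flag as the subtle point. (The residual overall sign and the factor of $\epsilon$ relative to the displayed relation are conventions the paper itself never pins down.) Compared with the paper, your route buys an actual self-contained verification, including the cross-commutator vanishing, which is the only part specific to the two-sided $(2N,2M)$ setting.
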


\begin{proof}
The proof can be proved similarly as the additional symmetry of two dimensional Toda hierarchy in \cite{adler}. Here we skip the proof.
\end{proof}

\section{ Extended  fermionic $(2N,2M)$-Toda hierarchy}
 In this section, we consider the reduction of the 2D
fermionic $(2N,2M)$-Toda lattice hierarchy for even values of
$(2N,2M)$ to the 1D space with additional logarithmic flows.

For even $(2N,2M)$, one can impose the
reduction constraint
 on the Lax operators \eqref{LaxOp} as follows:
\bbaa \label{redLax}
L^{2N}=\bar L^{2M}\equiv\L, \eeaa
which  leads to the following explicit form for the reduced Lax
operator
\bbaa \label{LaxKM} \L=\sum\limits_{k=0}^{2N+2M}
u_{k}(x) \Lambda^{2N-k}.
 \eeaa
One can find $\L^k_*=\L^k,\ \ k\in \Z_+$.
The Lax equation of the one dimensional fermionic $(2N,2M)$-Toda hierarchy on the reduced Lax
operator
\bbaa \label{1dLaxEq}\epsilon D_{t_s}\L&=&
[(\frac{\L^{s+1}}{(s+1)!})_+,\L\}, \eeaa
where the Lax operator $\L$ has the following dressing structure

\[\label{two dressing2}\L=W\Lambda^{2N} W^{-1}=\bar W\Lambda^{-2M} \bar W^{-1}.\]
Similarly as the \cite{C,ourJMP}, we define the following logarithmic operator
\begin{align}
\log_+\L&=2NW\circ\epsilon\partial\circ W^{-1},\\
\log_-\L&=-2M\bar W\circ\epsilon\partial\circ \bar W^{-1},
\end{align}
where $\d$ is the derivative about spatial variable $x$.
Combining these above logarithm operators together can derive the following important logarithmic operator
\begin{align}
\label{Log} \log \L:&=\frac14(\frac1N\log_+\L+\frac1M\log_-\L)=\sum_{i=-\infty}^{+\infty}w_i\Lambda^i,
\end{align}
which will generate a series of flow equations which contain the spatial flow in later defined Lax equations.
Here the logarithmic Lax operator $\log \L$  has the even Grassmann parity.

Under the reduction $u_{0}(x)=1$, one can derive the 1D $(2,2)$- Toda lattice hierarchy  in \cite{superbigraded}. In this case, the
representation \eqref{1dLaxEq}  will be given by the Lax operator
\bbaa\label{LaxT}
L_{2,2}=\Lambda^2+\gamma(x)\Lambda+\beta(x)+\rho(x)\Lambda^{-1}+\alpha(x)\Lambda^{-2}.\nn
\eeaa
As a consequence of eq. \eqref{1dLaxEq}, we
have the $D_0$ flow equations as

\bbaa\label{1dToda}
&&D_0 \alpha(x)=\alpha(x)(\beta(x)-\beta(x-2\epsilon)),\ \ \nn\\
&&D_0 \rho(x)=\rho(x)(\beta(x)-\beta(x-\epsilon))+\alpha(x+\epsilon) \gamma(x)-\alpha(x)\gamma(x-2\epsilon),
\nn \\
&&D_0\gamma(x)=\rho(x+2\epsilon)-\rho(x),\ \ \nn\\
 && D_0 \beta(x)=\alpha(x+2\epsilon)
-\alpha(x)+\gamma(x) \rho(x+\epsilon)+\gamma(x-\epsilon)\rho(x). \nn\eeaa

\subsection{ Lax equations of extended fermionic $(2N,2M)$-Toda hierarchy}

In this section we will  introduce the Lax equations of extended fermionic $(2N,2M)$-Toda hierarchy.
Let us first introduce some convenient notations.
\begin{definition}The operators $B_{j},G_{j}$ are defined as follows
\begin{align}\label{satoS}
\begin{aligned}
B_{j}&:=\frac{\L^{j+1}}{(j+1)!},\ \
G_{j}:=\frac{2\L^j}{j!}(\log \L-c_j),\ \  c_j=\sum_{i=1}^j\frac 1i,\ j\geq 0.
\end{aligned}
\end{align}
\end{definition}

Now we give the definition of the extended fermionic $(2N,2M)$-Toda hierarchy.
\begin{definition}The extended fermionic $(2N,2M)$-Toda hierarchy is a hierarchy in which the dressing operators $W,\bar W$ satisfy the following Sato equations
\begin{align}
\label{satoStz} \epsilon D_{t_{j}}W&=-(B_{j})_-W,& \epsilon D_{t_{j}}\bar W&=(B_{j})_+\bar W,  \\
\label{satoSsz}\epsilon D_{ y_{j}}W&=-(G_{j})_- W,& \epsilon D_{y_{j}}\bar W&=(G_{j})_+\bar W.\end{align}
\end{definition}

 From the previous proposition we derive the following  Lax equations for the Lax operators.
\begin{proposition}\label{Lax}
 The  Lax equations of the extended fermionic $(2N,2M)$-Toda hierarchy are as follows
   \begin{align}
\label{laxtjk}
  \epsilon D_{t_{j}} \L&= [(B_{j})_+,\L\},&
  \epsilon D_{y_{j}} \L&= [(G_{j})_+,\L\},\
  \epsilon D_{ t_{j}} \log \L= [(B_{j})_+ ,\log \L\},&
  \end{align}
   \begin{align}\epsilon D_{ y_{j}}\log \L=[ -(G_{j})_-,\log_+ \L \}+
[(G_{j})_+ ,\log_- \L \}.
\end{align}
\end{proposition}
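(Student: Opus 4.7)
The plan is to derive all four identities by dressing, that is, by differentiating the structural formulas $\L=W\Lambda^{2N}W^{-1}=\bar W\Lambda^{-2M}\bar W^{-1}$, $\log_+\L=2NW\circ\epsilon\partial\circ W^{-1}$, $\log_-\L=-2M\bar W\circ\epsilon\partial\circ\bar W^{-1}$ and substituting the Sato equations \eqref{satoStz}--\eqref{satoSsz}. Every operator entering the statement ($\L$, $\log_\pm\L$, $B_j$, $G_j$) has even $\Z_2$-parity, so the graded bracket $[\cdot,\cdot\}$ collapses to an ordinary commutator throughout, which keeps the bookkeeping clean.

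For the $\L$-equations I would differentiate $\L=W\Lambda^{2N}W^{-1}$ using $\epsilon D_{t_j}W=-(B_j)_-W$ and $\epsilon D_{y_j}W=-(G_j)_- W$; the derivative of $W^{-1}$ contributes the opposite sign, producing $\epsilon D_{t_j}\L=-[(B_j)_-,\L\}$ and $\epsilon D_{y_j}\L=-[(G_j)_-,\L\}$. Since $B_j$ is a polynomial in $\L$ and $G_j$ is a product of $\L^j$ with $\log\L$, and since $[\log\L,\L\}=0$ (a consequence of $[\epsilon\partial,\Lambda^{\pm k}]=0$ once conjugated by $W$ or $\bar W$), one has $[B_j,\L\}=[G_j,\L\}=0$. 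Splitting $B_j=(B_j)_++(B_j)_-$ and $G_j=(G_j)_++(G_j)_-$ then flips the minus parts to plus parts and yields the first two identities.

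For the $t_j$ flow on $\log\L$ I would differentiate $\log_+\L$ and $\log_-\L$ separately. Using $\epsilon D_{t_j}W=-(B_j)_-W$ and $\epsilon D_{t_j}\bar W=(B_j)_+\bar W$, the dressing calculation gives $\epsilon D_{t_j}\log_+\L=[-(B_j)_-,\log_+\L\}$ and $\epsilon D_{t_j}\log_-\L=[(B_j)_+,\log_-\L\}$. Because $B_j$ is a polynomial in $\L$ and $[\log_\pm\L,\L\}=0$, we have $[B_j,\log_\pm\L\}=0$, so both sides collapse to $[(B_j)_+,\log_\pm\L\}$. Taking the combination $\tfrac14(\tfrac1N\log_+\L+\tfrac1M\log_-\L)$ then gives the third identity.

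The fourth identity comes from the same dressing calculation but now with the $y_j$-Sato equations: $\epsilon D_{y_j}\log_+\L=[-(G_j)_-,\log_+\L\}$ and $\epsilon D_{y_j}\log_-\L=[(G_j)_+,\log_-\L\}$, assembled via the definition of $\log\L$. The main obstacle here is conceptual rather than computational: unlike in the $t_j$ case, one cannot merge the two pieces into a single bracket with $(G_j)_+$, because $G_j$ carries a $\log\L$ factor that mixes $\log_+\L$ and $\log_-\L$, and these two operators do not commute with one another. Consequently $[G_j,\log_+\L\}$ and $[G_j,\log_-\L\}$ fail to vanish separately, and the asymmetric form in the statement is forced on us. As a consistency check I would verify that the right-hand side of the last equation is a genuine Laurent series in $\Lambda$ (the potentially divergent tails of $\log_\pm\L$ should cancel upon the indicated projections), and also that the two dressings of $\L$ yield the same expressions in the first two identities, ensuring compatibility of the $W$- and $\bar W$-Sato equations on the reduction $L^{2N}=\bar L^{2M}$.
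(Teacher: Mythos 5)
Your dressing strategy is the right one: the paper states Proposition \ref{Lax} with no proof at all (the phrase ``from the previous proposition'' notwithstanding), and the argument you give --- differentiate $\L=W\Lambda^{2N}W^{-1}=\bar W\Lambda^{-2M}\bar W^{-1}$ and $\log_{\pm}\L$, substitute the Sato equations \eqref{satoStz}--\eqref{satoSsz}, and use $[B_j,\L\}=[G_j,\L\}=[B_j,\log_{\pm}\L\}=0$ to trade $-(\,\cdot\,)_-$ for $(\,\cdot\,)_+$ --- is clearly the intended derivation. Your observations that the even $\Z_2$-parity of every operator involved reduces the graded bracket to an ordinary commutator, that $[\log_{\pm}\L,\L\}=0$ follows from $[\epsilon\d,\Lambda^{\pm k}]=0$ under conjugation, and that $[G_j,\log_+\L\}$ and $[G_j,\log_-\L\}$ do not vanish separately (which is what forces the asymmetric form of the last equation) are all correct, and the first three identities go through exactly as you describe.

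One step does not close as written, though: the final assembly of the fourth identity. With the paper's definition \eqref{Log}, namely $\log\L=\frac14(\frac1N\log_+\L+\frac1M\log_-\L)$, your computation yields $\epsilon D_{y_j}\log\L=\frac{1}{4N}[-(G_j)_-,\log_+\L\}+\frac{1}{4M}[(G_j)_+,\log_-\L\}$. In the third identity the weights $\frac{1}{4N},\frac{1}{4M}$ are harmless because both brackets share the same first entry $(B_j)_+$ and the weights reassemble into $\log\L$ on the right-hand side; in the fourth identity the two brackets have different first entries, so the weighted sum is \emph{not} the unweighted sum displayed in the proposition. You should either record the weighted version as the actual outcome of the dressing computation, or note explicitly that the displayed formula presupposes a normalization of $\log_{\pm}\L$ (as in the extended bigraded Toda literature, where one takes $\log L=\frac12\log_+L+\frac12\log_-L$) that differs from \eqref{Log}; saying the result is ``assembled via the definition of $\log\L$'' papers over this mismatch. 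Your two proposed consistency checks (agreement of the $W$- and $\bar W$-dressings, which reduces to $[B_j,\L\}=[G_j,\L\}=0$, and the cancellation of the divergent tails of $\log_{\pm}\L$ under the indicated projections) are both appropriate and worth carrying out.
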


\section{Additional symmetry and Block algebra}

In this section, we will  construct of the flows of additional
symmetries which form the well-known Block algebra.
With the dressing operators given $M, \bar M$, we introduce Orlov-Schulman operators as following
\begin{eqnarray}\label{Moperator}
&&M=W\Gamma W^{-1}, \ \ \bar M=\bar W\bar \Gamma \bar W^{-1},\ \\
 &&\Gamma=
\frac{x}{2N\epsilon}\Lambda^{-2N}+\sum_{n\geq 0}
\frac{\Lambda^{2Nn}}{n}t_{n}+\sum_{n\geq 0}
\frac{2}{ (n-1)!}  \Lambda^{2N(n-1)} (\epsilon \d -  c_{n-1} )y_{n},\\
&&\bar \Gamma=
\frac{-x}{2M\epsilon}\Lambda^{2M}+\sum_{n\geq 0}
\frac{2}{ (n-1)!}  \Lambda^{2M(1-n)} (-\epsilon \d -  c_{n} )y_{n}.
\end{eqnarray}

Then one can prove the Lax operator $\L$ and Orlov-Schulman operators $M,\bar M$ all have the even Grassmann parity and  satisfy the following theorem.
\begin{proposition}\label{flowsofM}
The Lax operator $\L$ and Orlov-Schulman operators $M,\bar M$ of the extended fermionic $(2N,2M)$-Toda hierarchy
satisfy the following
\begin{eqnarray}
&[\L,M\}=1,[\L,\bar M\}=1,[\log_+\L,M\}=W\Lambda^{-1} W^{-1},[\log_-\L,\bar M\}=\bar W \Lambda\bar W^{-1},\\ \label{Mequation}
&\epsilon D_{t_{k}} M^m\L^k= [(B_{k})_+,M^m\L^k\}, \epsilon D_{t_{k}} \bar M^m\L^k= [(B_{k})_+,\bar M^m\L^k\}, \\
 \notag
&\epsilon D_{y_{j}}
M^m\L^k=[\frac{\L^j}{j!}(\log_+ \L-c_j) -(G_{j})_-,
M^m\L^k\},\;  \epsilon D_{y_{j}}
\bar M^m\L^k=[-\frac{\L^j}{j!}(\log_- \L-c_j)+(G_{j})_+, \bar M^m\L^k\}.\\
\end{eqnarray}
\end{proposition}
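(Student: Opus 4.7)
\emph{Proof proposal.} The plan is to reduce the proposition to a small set of single-operator identities and then propagate to the monomials $M^m\L^k$ and $\bar M^m\L^k$ by the derivation property \eqref{derSK}. Since $W,\bar W,\L,M,\bar M$ all carry even Grassmann parity, $[\cdot,\cdot\}$ reduces to the ordinary commutator on every expression appearing in the statement, so the ordinary Leibniz rule applies. Consequently, once one knows $[\L,M\}=1$, $[\L,\bar M\}=1$, the Lax equations of Proposition \ref{Lax}, and the single-operator flow equations for $M$ and $\bar M$, the corresponding flow equations on arbitrary monomials follow by induction on the exponents $m$ and $k$.

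To establish the canonical relations I would dress the bare versions. Writing $\L=W\Lambda^{2N}W^{-1}$ and $M=W\Gamma W^{-1}$, one has $[\L,M\}=W[\Lambda^{2N},\Gamma\}W^{-1}$, and among the summands of $\Gamma$ only $\tfrac{x}{2N\epsilon}\Lambda^{-2N}$ fails to commute with $\Lambda^{2N}$; the elementary shift identity $\Lambda^{2N}x=(x+2N\epsilon)\Lambda^{2N}$ then produces $1$. The relation $[\L,\bar M\}=1$ is symmetric using $\L=\bar W\Lambda^{-2M}\bar W^{-1}$ and the analogous $x$-summand of $\bar\Gamma$. For the log identities, use $\log_+\L = 2N\,W(\epsilon\partial)W^{-1}$ to reduce $[\log_+\L,M\}$ to $W[2N\epsilon\partial,\Gamma\}W^{-1}$; the $t_n$-terms commute with $\partial$, the $y_n$-terms are polynomial in $\epsilon\partial$ and so commute with it, and only the $x$-term contributes, producing the dressed lattice shift $W\Lambda^{-1}W^{-1}$. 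The barred version is analogous.

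For the single-operator flows, applying $\epsilon D_*$ to $M=W\Gamma W^{-1}$ with a generic Sato equation $\epsilon D_* W = -A_- W$ yields
\begin{align*}
\epsilon D_* M = [-A_-, M\} + W(\epsilon D_*\Gamma)W^{-1},
\end{align*}
and the task is to check that the inhomogeneous piece promotes $-A_-$ to the full operator claimed. For $D_{t_k}$ on $M$, the explicit $t_k$-derivative of $\Gamma$ dresses to a scalar multiple of $\L^k$, which together with $-(B_k)_-$ and the identity $[\L^{k+1},M\}=(k+1)\L^k$ (derived from $[\L,M\}=1$) reassembles into $[(B_k)_+,M\}$. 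The $D_{t_k}$-flow on $\bar M$ is easier: $\bar\Gamma$ has no explicit $t_k$-dependence, so the inhomogeneous term vanishes and the Sato equation $\epsilon D_{t_k}\bar W=(B_k)_+\bar W$ directly gives $[(B_k)_+,\bar M\}$.

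The main obstacle is the $y_j$-flow. The $y_j$-dependent summand of $\Gamma$ is $\tfrac{2}{(j-1)!}\Lambda^{2N(j-1)}(\epsilon\partial - c_{j-1})$, and dressing by $W$ requires the key identity $W(\epsilon\partial)W^{-1} = \tfrac{1}{2N}\log_+\L$ to rewrite this contribution as a multiple of $\tfrac{\L^j}{j!}(\log_+\L - c_j)$, where the shift $c_{j-1}\to c_j$ is forced by the combination of the factor $\tfrac{1}{2N}$ with the degree increase $\Lambda^{2N(j-1)}\to\L^j$. Combined with $-(G_j)_-$ from the Sato equation, and recalling $G_j=\tfrac{2\L^j}{j!}(\log\L - c_j)$ with $\log\L=\tfrac14(\tfrac1N\log_+\L+\tfrac1M\log_-\L)$, exactly the $\log_+$-half survives on $M$; the symmetric computation on $\bar M$ using $\bar W(\epsilon\partial)\bar W^{-1} = -\tfrac{1}{2M}\log_-\L$ leaves only the $\log_-$-half, reproducing the second asymmetric formula. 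This matching, where the interlocking definitions of $\Gamma,\bar\Gamma,G_j$ force precisely one of the two logarithms to survive in each case, is the only delicate bookkeeping step.
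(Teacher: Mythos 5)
Your proposal follows essentially the same route as the paper: the paper's proof consists precisely of exhibiting the bare (undressed) commutation relations $[D_{t_n}-\Lambda^{2(n+1)N}/(n+1)!,\Gamma\}=0$, $[D_{y_n}-\tfrac{2}{n!}\Lambda^{2Nn}(\epsilon\partial-c_n),\Gamma\}=0$ and their barred analogues and then dressing by $W$, $\bar W$, which is exactly your computation $\epsilon D_*M=[-A_-,M\}+W(\epsilon D_*\Gamma)W^{-1}$ with the inhomogeneous dressed term supplying the missing positive part. Your explicit Leibniz-rule propagation to the monomials $M^m\L^k$ (legitimate since all operators involved are Grassmann-even, so the generalized bracket reduces to the ordinary commutator) merely makes precise what the paper compresses into ``the other identities can be proved by similar dressing methods.''
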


\begin{proof}
One can prove the proposition by dressing the following several commutative Lie brackets
\begin{eqnarray*}&&[D_{ t_{n}}-\frac{\Lambda^{2(n+1)N}}{(n+1)!},\Gamma\}\\
&=&[D_{ t_{n}}-\frac{\Lambda^{2(n+1)N}}{(n+1)!},\frac{x}{2N\epsilon}\Lambda^{-2N}+\sum_{n\geq 0}
\frac{\Lambda^{2Nn}}{n}t_{n}+\sum_{n\geq 0}
\frac{2}{ (n-1)!}  \Lambda^{2N(n-1)} (\epsilon \d -  c_{n-1} )y_{n}\}\\&=&0,
\end{eqnarray*}

\begin{eqnarray*}&&[D_{ y_{n}}-\frac{2}{ n!}  \Lambda^{2Nn} (\epsilon \d -  c_n ),\Gamma\}\\
&=&[D_{ y_{n}}-\frac{2}{ n!}  \Lambda^{2Nn} (\epsilon \d -  c_n ),\frac{x}{2N\epsilon}\Lambda^{-2N}+\sum_{n\geq 0}
\frac{\Lambda^{2Nn}}{n}t_{n}+\sum_{n\geq 0}
\frac{2}{ (n-1)!}  \Lambda^{2N(n-1)} (\epsilon \d -  c_{n-1} )y_{n}\}\\&=&0,
\end{eqnarray*}

\begin{eqnarray*}&&[D_{ y_{n}}+\frac{2}{ n!}  \Lambda^{-2Mn} (-\epsilon \d -  c_n ),\bar \Gamma\}\\
&=&[D_{ y_{n}}+\frac{2}{ n!}  \Lambda^{-2Mn} (-\epsilon \d -  c_n ),\frac{-x}{2M\epsilon}\Lambda^{2M}+\sum_{n\geq 0}
\frac{2}{ (n-1)!}  \Lambda^{2M(1-n)} (-\epsilon \d -  c_{n} )y_{n}\}\\&=&0.
\end{eqnarray*}
The other identities can be proved by similar dressing methods which will be skipped here.

\end{proof}
We are now to define the additional flows, and then to
prove that they are symmetries, which are called additional
symmetries of the extended fermionic $(2N,2M)$-Toda hierarchy. We introduce additional
independent variables $t^*_{m,l}$ and define the actions of the
additional flows on the wave operators as
\begin{eqnarray}\label{definitionadditionalflowsonphi2}
\epsilon D_{t^*_{m,l}}W=-\left((M-\bar M)^m\L^l\right)_{-}W, \ \ \ \epsilon D_{t^*_{m,l}}
\bar W=\left((M-\bar M)^m\L^l\right)_{+}\bar W,
\end{eqnarray}
where $m\geq 0, l\geq 0$. The following theorem shows that the definition \eqref{definitionadditionalflowsonphi2} is compatible with reduction condition \eqref{two dressing2} of the  extended fermionic $(2N,2M)$-Toda hierarchy.
\begin{proposition}\label{preserve constraint}
The additional flows \eqref{definitionadditionalflowsonphi2} preserve reduction condition \eqref{LaxOp} or \eqref{two dressing2} of the extended fermionic $(2N,2M)$-Toda hierarchy.
\end{proposition}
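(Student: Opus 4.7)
The plan is to show that the two expressions for $\mathcal{L}$, namely $W\Lambda^{2N}W^{-1}$ and $\bar W\Lambda^{-2M}\bar W^{-1}$, evolve identically under $D_{t^*_{m,l}}$, so that if they coincide initially they coincide at all times. First I would differentiate both sides using the defining relations \eqref{definitionadditionalflowsonphi2} together with the standard identity $D(W^{-1})=-W^{-1}(DW)W^{-1}$. Writing $A:=(M-\bar M)^m\mathcal{L}^l$ for brevity, a short computation gives
\begin{align*}
\epsilon D_{t^*_{m,l}}\bigl(W\Lambda^{2N}W^{-1}\bigr)&=-(A)_{-}\mathcal{L}+\mathcal{L}(A)_{-}=[\mathcal{L},(A)_{-}\},\\
\epsilon D_{t^*_{m,l}}\bigl(\bar W\Lambda^{-2M}\bar W^{-1}\bigr)&=(A)_{+}\mathcal{L}-\mathcal{L}(A)_{+}=-[\mathcal{L},(A)_{+}\}.
\end{align*}
Since $\mathcal L$, $M$, $\bar M$ all have even Grassmann parity by Proposition~\ref{flowsofM}, the bracket $[\cdot,\cdot\}$ reduces to the ordinary commutator on all the operators appearing here, so no sign subtleties arise at this stage.

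The next step is to establish that the two right-hand sides are equal, which amounts to showing $[\mathcal{L},A\}=0$. For this I would invoke the canonical relations $[\mathcal{L},M\}=1$ and $[\mathcal{L},\bar M\}=1$ from Proposition~\ref{flowsofM}, which immediately give $[\mathcal{L},M-\bar M\}=0$. An elementary induction on $m$, using the derivation property \eqref{derSK} of the bracket (all operators being even, this is just the usual Leibniz rule), then yields $[\mathcal{L},(M-\bar M)^m\}=0$. Combining with $[\mathcal{L},\mathcal{L}^l\}=0$ and the Leibniz rule once more gives $[\mathcal{L},A\}=0$, as desired.

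Consequently, from $[\mathcal{L},A\}=0=[\mathcal{L},(A)_{+}\}+[\mathcal{L},(A)_{-}\}$ we obtain
\begin{equation*}
[\mathcal{L},(A)_{-}\}=-[\mathcal{L},(A)_{+}\},
\end{equation*}
so the two derivative expressions above coincide. Hence $D_{t^*_{m,l}}\bigl(W\Lambda^{2N}W^{-1}-\bar W\Lambda^{-2M}\bar W^{-1}\bigr)=0$, and the reduction constraint \eqref{redLax}, equivalently \eqref{two dressing2}, is preserved by the additional flow.

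The argument is in essence the standard Adler--Shiota--van Moerbeke style verification, adapted to the graded setting. The main potential obstacle is bookkeeping with the involution $*$ and the generalized bracket \eqref{SK-bracket}, but because $\mathcal L$, $M$, $\bar M$ and their products $(M-\bar M)^m\mathcal{L}^l$ are all of even parity, every graded bracket collapses to an ordinary commutator and the graded Leibniz rule becomes the usual Leibniz rule; this is the reason the symmetric combination $M-\bar M$ (rather than, say, $M$ alone) has to be used in the definition \eqref{definitionadditionalflowsonphi2}, and it is the only point at which the specific structure of the reduction is really used.
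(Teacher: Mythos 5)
Your proof is correct and follows essentially the same route as the paper's: compute $\epsilon D_{t^*_{m,l}}\L$ through both dressings $W\Lambda^{2N}W^{-1}$ and $\bar W\Lambda^{-2M}\bar W^{-1}$, obtaining $-[(A)_-,\L\}$ and $[(A)_+,\L\}$ respectively, and then use $[\L,M-\bar M\}=0$ to equate them. You are in fact a little more explicit than the paper at two points --- the Leibniz-rule induction giving $[\L,(M-\bar M)^m\L^l\}=0$, and the final phrasing as the vanishing of the derivative of the difference of the two dressed expressions --- but the underlying argument is identical.
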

\begin{proof} By performing the derivative on $\L$ dressed by $W$ and
using the additional flow about $W$ in \eqref{definitionadditionalflowsonphi2}, we get
\begin{eqnarray*}
(\epsilon D_{t^*_{m,l}}\L)&=& (\epsilon D_{t^*_{m,l}}W)\ \La^{2N} W^{-1}
+ W\ \La^{2N}\ (\epsilon D_{t_{m,l}}W^{-1})\\
&=&-((M-\bar M)^m\L^l)_{-}W\ \La^{2N}\ W^{-1}- W\ \La^{2N}
W^{-1}\ (\epsilon D_{t^*_{m,l}}W)
\ W^{-1}\\
&=&-((M-\bar M)^m\L^l)_{-} \L+ \L ((M-\bar M)^m\L^l)_{-}\\
&=&-[((M-\bar M)^m\L^l)_{-},\L\}.
\end{eqnarray*}
Similarly, we perform the derivative on $\L$ dressed by $\bar W$ and
use the additional flow about $\bar W$ in \eqref{definitionadditionalflowsonphi2} to get the following
\begin{eqnarray*}
(\epsilon D_{t^*_{m,l}}\L)&=& (\epsilon D_{t^*_{m,l}}\bar W)\ \La^{-2M} \bar W^{-1}
+ \bar W\ \La^{-2M}\ (\epsilon D_{t_{m,l}}\bar W^{-1})\\
&=&((M-\bar M)^m\L^l)_{+} \bar W\ \La^{-2M}\ \bar W^{-1}- \bar W\ \La^{-2M}
\bar W^{-1}\ (\epsilon D_{t^*_{m,l}}\bar W)
\ \bar W^{-1}\\
&=&((M-\bar M)^m\L^l)_{+} \L- \L ((M-\bar M)^m\L^l)_{+}\\
&=&[((M-\bar M)^m\L^l)_{+},\L\}.
\end{eqnarray*}
Because
\begin{eqnarray}\label{EZTHadditionalflow111.}
[M-\bar M,\L\}=0,
\end{eqnarray}
therefore
\begin{eqnarray}\label{EZTHadditionalflow1111}
\epsilon D_{t^*_{m,l}}\L=[-\left((M-\bar M)^m\L^l\right)_{-},
\L\}=[\left((M-\bar M)^m\L^l\right)_{+}, \L\},
\end{eqnarray}
which gives the compatibility of additional flow of extended fermionic $(2N,2M)$-Toda hierarchy with reduction condition \eqref{two dressing2}.
\end{proof}

Similarly,  by performing the derivative on  $M$ given in (\ref{Moperator}), there exists
a similar derivative as $\epsilon D_{t^*_{m,l}}\L$, i.e.,
\begin{eqnarray*}
(\epsilon D_{t^*_{m,l}}M)&\!\!\!=\!\!\!&(\epsilon D_{t^*_{m,l}}W)\ \Gamma  W^{-1}
+ W\ \Gamma \ (\epsilon D_{t^*_{m,l}}W^{-1})\\
&\!\!\!=\!\!\!&-((M-\bar M)^m\L^l)_{-} W\ \Gamma \ W^{-1}- W\ \Gamma
W^{-1}\ (\epsilon D_{t^*_{m,l}}W)
\ W^{-1}\\
&\!\!\!=\!\!\!&-((M-\bar M)^m\L^l)_{-} M+ M
((M-\bar M)^m\L^l)_{-}\\
&=&-[((M-\bar M)^m\L^l)_{-}, M\},
\end{eqnarray*}
where the fact that $\Gamma $ does not depend on the additional
variables $t^*_{m,l}$ has been used. Then we can take derivatives on the dressing structure of   $\bar M$ to get the additional derivatives  act on  $M$, $\bar M$ as
\begin{eqnarray}
\label{EZTHadditionalflow11'}
\epsilon D_{t^*_{m,l}}M&=&[-\left((M-\bar M)^m\L^l\right)_{-}, M\},
\\
\label{EZTHadditionalflow12}
\epsilon D_{t^*_{m,l}}
\bar M&=&[\left((M-\bar M)^m\L^l\right)_{+}, \bar M\}.
\end{eqnarray}

By  above results,  the following corollary can be easily got.
\begin{corollary}\label{additionflowsonLnMmAnkZ}
For $ n,k,m,l\geq 0$,
the following identities hold true
\begin{eqnarray}\label{EZTHadditionalflow4}
 \epsilon D_{t^*_{m,l}} M^n\L^k=-[((M-\bar M)^m\L^l)_{-}, M^n\L^k\}
,\ \ \
 \epsilon D_{t^*_{m,l}} \bar M^n\L^k=[((M-\bar M)^m\L^l)_{+},
 \bar M^n\L^k\}.
\end{eqnarray}
\end{corollary}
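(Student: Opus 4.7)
The plan is to derive the corollary as a direct consequence of the three single-operator flows already established, namely equations \eqref{EZTHadditionalflow1111}, \eqref{EZTHadditionalflow11'} and \eqref{EZTHadditionalflow12}, together with the derivation property \eqref{derSK} of the generalized graded bracket. A crucial observation is that by Proposition \ref{flowsofM} the operators $\L$, $M$ and $\bar M$ are all even, so every product built from them is even, and for even operators the bracket $[\cdot,\cdot\}$ reduces to the ordinary commutator while \eqref{derSK} collapses to the ordinary Leibniz rule $[A,BC] = [A,B]C + B[A,C]$. Since $((M-\bar M)^m\L^l)_{\pm}$ is also even (it is a projection of an even operator), the same simplification applies throughout.

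With this in hand, I would first verify by induction on $k\ge 1$ the identity
\[
\epsilon D_{t^*_{m,l}}\L^k = -\bigl[((M-\bar M)^m\L^l)_-,\,\L^k\bigr\} = \bigl[((M-\bar M)^m\L^l)_+,\,\L^k\bigr\},
\]
using \eqref{EZTHadditionalflow1111} as the base case and the Leibniz rule on $\L^{k+1} = \L\cdot\L^k$ for the inductive step. An analogous induction on $n$, starting from \eqref{EZTHadditionalflow11'}, gives
\[
\epsilon D_{t^*_{m,l}}M^n = -\bigl[((M-\bar M)^m\L^l)_-,\,M^n\bigr\},
\]
and similarly for $\bar M^n$ from \eqref{EZTHadditionalflow12} with the $+$-projection. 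Combining these by writing $M^n\L^k$ as the product $M^n\cdot\L^k$ and applying the Leibniz rule one more time yields the first identity of \eqref{EZTHadditionalflow4}, and the second identity for $\bar M^n\L^k$ is obtained the same way but using the $+$-projection form throughout.

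The only substantive bookkeeping point is the mixed case $\bar M^n\L^k$: the natural base case for $\bar M$ is expressed with $((M-\bar M)^m\L^l)_+$, so to combine it with the base case for $\L$ one must rewrite the action on $\L$ using the $+$-projection rather than the $-$-projection. This rewriting is legitimate thanks to the second equality in the displayed line above, which in turn relies on $[M-\bar M,\L\}=0$ from \eqref{EZTHadditionalflow111.}. Apart from this, no real obstacle arises: the flow $D_{t^*_{m,l}}$ acts as commutator with a fixed operator on each of $\L$, $M$, $\bar M$, and such derivations automatically extend to arbitrary polynomial combinations by induction.
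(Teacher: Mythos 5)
Your proposal is correct and fills in exactly the argument the paper leaves implicit (the paper merely asserts the corollary ``can be easily got'' from the preceding flow equations on $\L$, $M$ and $\bar M$): since all operators involved have even Grassmann parity, the graded bracket reduces to the ordinary commutator and the derivation property to the Leibniz rule, so the single-operator flows extend to $M^n\L^k$ and $\bar M^n\L^k$ by induction on powers. Your bookkeeping for the mixed case $\bar M^n\L^k$ --- rewriting the $\L$-flow with the $+$-projection via $[M-\bar M,\L\}=0$ --- is precisely the point that needs care, and you handle it correctly.
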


With  Corollary \ref{additionflowsonLnMmAnkZ}, the following theorem can be proved.

\begin{theorem}\label{symmetry}
The additional flows $D_{t^*_{m,l}}$ commute
with the extended fermionic $(2N,2M)$-Toda hierarchy flows $D_{t_{n}},D_{y_{n}}$, i.e.,
\begin{eqnarray}
[D_{t^*_{m,l}}, D_{t_{n}}\}\Phi=0,\ \  \ [D_{t^*_{m,l}}, D_{y_{n}}\}\Phi=0,
\end{eqnarray}
where $\Phi$ can be $W$, $\bar W$ or $\L$, $ n\geq 0$.

\end{theorem}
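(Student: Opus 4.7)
The plan is to verify the two commutation identities by acting on $W$; the argument on $\bar W$ will be parallel (with positive and negative projections interchanged), and commutativity on $\L = W\Lambda^{2N}W^{-1}$ will follow from the Leibniz rule once it holds on $W$ and hence on $W^{-1}$. As a preliminary simplification, $\L, M, \bar M$ all carry even $\mathbb{Z}_2$-parity (Proposition \ref{flowsofM}), so the generalized bracket $[\cdot,\cdot\}$ collapses to the ordinary commutator throughout, and the involutions $*$ drop out of every Lax equation I shall quote.

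For $[D_{t^*_{m,l}},D_{t_n}\}W$, I set $A=(M-\bar M)^m\L^l$ and $B=B_n=\L^{n+1}/(n+1)!$ and expand both orders of composition,
\begin{align*}
\epsilon^2 D_{t^*_{m,l}}D_{t_n}W &= -(\epsilon D_{t^*_{m,l}}B)_-W + B_-A_-W,\\
\epsilon^2 D_{t_n}D_{t^*_{m,l}}W &= -(\epsilon D_{t_n}A)_-W + A_-B_-W.
\end{align*}
Corollary \ref{additionflowsonLnMmAnkZ} gives $\epsilon D_{t^*_{m,l}}B = -[A_-,B]$, and Proposition \ref{flowsofM} gives $\epsilon D_{t_n}A = [B_+,A]$. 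Substituting and using the Zakharov-Shabat-type identity
\begin{align*}
[A_-,B] + [B_+,A] = [A_-,B_-] - [A_+,B_+],
\end{align*}
together with the elementary observations $([A_+,B_+])_-=0$ and $([A_-,B_-])_-=[A_-,B_-]$, reduces the commutator to $[A_-,B_-]W+[B_-,A_-]W=0$.

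For $[D_{t^*_{m,l}},D_{y_n}\}W$ the same strategy works after a bookkeeping step. I write $G_n=P_n+Q_n$ with $P_n=\frac{\L^n}{n!}(\log_+\L-c_n)$, $Q_n=\frac{\L^n}{n!}(\log_-\L-c_n)$, and set $R_n:=(P_n)_+-(Q_n)_-$. A direct manipulation of Proposition \ref{flowsofM} shows $\epsilon D_{y_n}M^a\L^b=[R_n,M^a\L^b]$ and $\epsilon D_{y_n}\bar M^a\L^b=[R_n,\bar M^a\L^b]$, so in particular $\epsilon D_{y_n}A=[R_n,A]$. Dually, dressing $\epsilon D_{t^*_{m,l}}W=-A_-W$ and $\epsilon D_{t^*_{m,l}}\bar W=A_+\bar W$ through $\log_+\L=2NW\epsilon\partial W^{-1}$ and $\log_-\L=-2M\bar W\epsilon\partial\bar W^{-1}$ gives $\epsilon D_{t^*_{m,l}}\log_+\L=-[A_-,\log_+\L]$ and $\epsilon D_{t^*_{m,l}}\log_-\L=[A_+,\log_-\L]$, whence $\epsilon D_{t^*_{m,l}}P_n=-[A_-,P_n]$ and $\epsilon D_{t^*_{m,l}}Q_n=[A_+,Q_n]$. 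Feeding these into the analogue of the expansion above and applying a pair of Zakharov-Shabat identities (one for the $P_n$ part against $A_-$, one for the $Q_n$ part against $A_+$) again produces a sum of terms of the shape $[X_-,Y_-]W+[Y_-,X_-]W$ that cancel.

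The main obstacle will be the $y_n$ case: the asymmetry between $\log_+\L$ and $\log_-\L$ inside $G_n$, combined with the fact that $D_{t^*_{m,l}}$ acts on $M$ via $[-A_-,\cdot]$ but on $\bar M$ via $[A_+,\cdot]$, obscures the cancellation until one reorganizes everything through the single dressed operator $R_n$. Once that packaging is in place, the classical two-dimensional Toda argument (as in Adler-Shiota-van Moerbeke) carries through verbatim, because the evenness of all operators involved suppresses every sign coming from the superalgebraic bracket.
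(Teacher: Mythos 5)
Your proof is correct and follows essentially the same route as the paper: the $t_n$ case is the standard Zakharov--Shabat cancellation with $A=(M-\bar M)^m\L^l$, and the $y_n$ case is handled exactly as in the paper by splitting $G_n$ into its $\log_+\L$ and $\log_-\L$ parts and repackaging the flow generator as $(P_n)_+-(Q_n)_-$, which is precisely the combination appearing in Proposition \ref{flowsofM}. The only cosmetic difference is that you act on $W$ where the paper acts on $\bar W$ for the extended flows.
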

\begin{proof} According to the definition,
\begin{eqnarray*}
[D_{t^*_{m,l}},D_{t_{n}}\}W=D_{t^*_{m,l}}
(D_{t_{n}}W)-
D_{t_{n}} (D_{t^*_{m,l}}W),
\end{eqnarray*}
and using the actions of the additional flows and the fermionic $(2N,2M)$-Toda
flows on $W$,  we have
\begin{eqnarray*}
\epsilon^2[D_{t^*_{m,l}},D_{t_{n}}\}W
&=& -\epsilon D_{t^*_{m,l}}\left((B_{n})_{-}W\right)+
\epsilon D_{t_{n}} \left(((M-\bar M)^m\L^l)_{-}W \right)\\
&=& -\epsilon(D_{t^*_{m,l}}B_{n} )_{-}W-\epsilon
(B_{n})_{-}(D_{t^*_{m,l}}W)\\&&+
\epsilon[D_{t_{n}} ((M-\bar M)^m\L^l)\}_{-}W +\epsilon
((M-\bar M)^m\L^l)_{-}(D_{t_{n}}W).
\end{eqnarray*}
Using \eqref{definitionadditionalflowsonphi2} and Proposition \ref{flowsofM}, it
equals
\begin{eqnarray*}
\epsilon^2[D_{t^*_{m,l}},D_{t_{n}}\}W
&=&[\left((M-\bar M)^m\L^l\right)_{-}, B_{n}\}_{-}W+
(B_{n})_{-}\left((M-\bar M)^m\L^l\right)_{-}W\\
&&+[(B_{n})_{+},(M-\bar M)^m\L^l\}_{-}W-((M-\bar M)^m\L^l)_{-}(B_{n})_{-}W\\
&=&[((M-\bar M)^m\L^l)_{-}, B_{n}\}_{-}W- [(M-\bar M)^m\L^l,
(B_{n})_{+}\}_{-}W\\&&+
[(B_{n})_{-},((M-\bar M)^m\L^l)_{-}\}W\\
&=&0.
\end{eqnarray*}
Similarly, using \eqref{definitionadditionalflowsonphi2} and Proposition \ref{flowsofM}, we can prove the additional flows   commute with
flows $\d_{t_{n}}$ in the sense
of acting on $W$.  Here
we also give the proof for commutativity of  additional symmetries with the extended
flow $\d_{y_{n}}$. To be a little
different from the proof above, we let the Lie bracket act on $\bar W$,
\begin{eqnarray*}
\epsilon^2[D_{t^*_{m,l}},D_{y_{n}}\}\bar W &=&
\epsilon D_{t^*_{m,l}}(G_{n})_+\bar W -
\epsilon D_{y_{n}} \left(((M-\bar M)^m\L^l)_{+}\bar W  \right)\\
&=& \epsilon D_{t^*_{m,l}}(G_{n})_+ \bar W +
\epsilon(G_{n})_+ (D_{t^*_{m,l}}
)\bar W\\&&-\epsilon(D_{y_{n}} ((M-\bar M)^m\L^l))_{+}\bar W
-\epsilon((M-\bar M)^m\L^l)_{+}(D_{y_{n}}\bar W ),
\end{eqnarray*}
which further leads to
\begin{eqnarray*}
\epsilon^2[D_{t^*_{m,l}},D_{y_{n}}\}\bar W
&=&[((M-\bar M)^m\L^l)_{+}, \frac{\L^n}{n!}(\log_- \L-c_n)\}_{+}\bar W\\
&&-[((M-\bar M)^m\L^l)_{-}, \frac{\L^n}{n!}(\log_+ \L-c_n)\}_{+}\bar W\\
&&+(G_{n})_+ ((M-\bar M)^m\L^l)_{+}\bar W -((M-\bar M)^m\L^l)_{+}(G_{n})_+\bar W\\
&&-[\left(\frac{\L^n}{n!}(\log_+ \L-c_n) \right)_+-\left(\frac{\L^n}{n!}(\log_- \L-c_n) \right)_-,(M-\bar M)^m\L^l\}_{+}\bar W  \\
&=&[((M-\bar M)^m\L^l)_{+}, (\frac{\L^n}{n!}(\log_- \L-c_n))_+\}_{+}\bar W\\
&&-[((M-\bar M)^m\L^l)_{-}, \frac{\L^n}{n!}(\log_+ \L-c_n)\}_{+}\bar W\\
&&+(G_{n})_+ ((M-\bar M)^m\L^l)_{+}\bar W -((M-\bar M)^m\L^l)_{+}(G_{n})_+\bar W\\
&&+[(M-\bar M)^m\L^l,\left(\frac{\L^n}{n!}(\log_+ \L-c_n) \right)_+\}_{+}\bar W  \\
&=&0.
\end{eqnarray*} The other cases in the theorem can be proved in similar ways.
\end{proof}
The commutative property in Theorem \ref{symmetry} means that
additional flows are symmetries of the extended fermionic $(2N,2M)$-Toda hierarchy.
Since they are symmetries, we will consider the algebraic
structure among these additional symmetries which is included in the following important
theorem.
\begin{theorem}\label{WinfiniteCalgebra}
The additional flows  $\epsilon D_{t^*_{m,l}}$ of the extended fermionic $(2N,2M)$-Toda hierarchy form a Block type Lie algebra with the
following relation
 \begin{eqnarray}\label{algebra relation}
[\epsilon D_{t^*_{m,l}},\epsilon D_{t^*_{n,k}}\}= (km-n l)\epsilon D_{t^*_{m+n-1,k+l-1}},
\end{eqnarray}
which holds in the sense of acting on  $W$, $\bar W$ or $\L$ and  $m,n,l,k\geq 0.$
\end{theorem}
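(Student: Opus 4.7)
The strategy is to verify the bracket identity acting on the dressing operator $W$; the analogous checks on $\bar W$ and on $\L$ follow by parallel calculations (using the $+$ projection in place of the $-$ projection for $\bar W$, and using Proposition \ref{preserve constraint} together with both dressing representations for $\L$). Set $A_{m,l} := (M-\bar M)^m \L^l$ and denote by $(\cdot)_\pm$ the usual projections. Applying the Sato equation \eqref{definitionadditionalflowsonphi2} twice and using the Leibniz rule yields
\begin{equation*}
[\epsilon D_{t^*_{m,l}}, \epsilon D_{t^*_{n,k}}\}W = -\bigl(\epsilon D_{t^*_{m,l}} A_{n,k}\bigr)_- W + \bigl(\epsilon D_{t^*_{n,k}} A_{m,l}\bigr)_- W - [(A_{m,l})_-, (A_{n,k})_-\}\,W,
\end{equation*}
so the task reduces to evaluating the inner derivations and showing that the right-hand side collapses to $-(km-nl)(A_{m+n-1,k+l-1})_-W$.

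For the inner derivation I use the actions of $\epsilon D_{t^*_{m,l}}$ on the building blocks $M$, $\bar M$ and $\L$ recorded in \eqref{EZTHadditionalflow11'}, \eqref{EZTHadditionalflow12} and Corollary \ref{additionflowsonLnMmAnkZ}. The essential input from Proposition \ref{flowsofM} is the pair of Heisenberg-type identities $[\L, M\} = [\L, \bar M\} = 1$; subtracting them gives $[\L, M-\bar M\} = 0$, which lets me freely reorder factors inside $A_{n,k}$, whereas using them separately gives the shifts $[\L^l, M\} = [\L^l, \bar M\} = l\L^{l-1}$ that will feed the Block coefficient $mk - nl$ after antisymmetrization in $(m,l)\leftrightarrow(n,k)$.

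Applying the Leibniz rule to $A_{n,k} = (M-\bar M)^n \L^k$, and remembering that $\epsilon D_{t^*_{m,l}}$ acts as $-[(A_{m,l})_-,\cdot\}$ on $M$ and on $\L$ but as $+[(A_{m,l})_+,\cdot\}$ on $\bar M$, a careful expansion produces the structural identity
\begin{equation*}
\epsilon D_{t^*_{m,l}} A_{n,k} = -[(A_{m,l})_-, A_{n,k}\} + R_{m,l;n,k},
\end{equation*}
where $R_{m,l;n,k}$ collects the discrepancy coming from the asymmetry between the $M$- and $\bar M$-actions. Substituting this back into the formula of the first paragraph, the $-[(A_{m,l})_-, A_{n,k}\}_-$ piece combines with $-[(A_{m,l})_-,(A_{n,k})_-\}$ and its counterpart with swapped indices through the standard Orlov--Schulman projection identity $[X_-,Y_-\} = [X,Y_-\}_- - [X_+,Y_-\}_-$, while the antisymmetrized discrepancy $R_{m,l;n,k} - R_{n,k;m,l}$ collapses to $-(mk-nl)\,A_{m+n-1,k+l-1}$ via the Heisenberg shift identity, applied once for each factor of $\L$ commuted past a factor of $M$ or $\bar M$.

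The main obstacle is the combinatorial bookkeeping in the Leibniz expansion of $D_{t^*_{m,l}}(M-\bar M)^n$: because the derivation acts differently on $M$ and on $\bar M$, the expansion does not collapse into a single commutator with $A_{m,l}$, and one has to track systematically the "leftover" shifts generated each time an $\L$-factor is permuted with an $M$- or $\bar M$-factor. The computation is a direct Bosonic adaptation of the one carried out in \cite{adler} for the 2D Toda hierarchy and refined in \cite{ourBlock,blockds} for the extended bigraded and supersymmetric variants; since $\L$, $M$ and $\bar M$ all have even Grassmann parity, the graded bracket $[\cdot,\cdot\}$ reduces throughout to the ordinary commutator, so no additional sign subtleties enter and the matching with the Block algebra relation $(km-nl)D_{t^*_{m+n-1,k+l-1}}$ is automatic once the bookkeeping is completed.
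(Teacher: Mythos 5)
Your proposal is correct and follows essentially the same route as the paper's own proof: both expand $[\epsilon D_{t^*_{m,l}},\epsilon D_{t^*_{n,k}}\}$ on the dressing operator via the Sato equations and a Leibniz expansion of $D_{t^*_{m,l}}\bigl((M-\bar M)^n\L^k\bigr)$, then collapse the result using the Heisenberg relations $[\L,M\}=[\L,\bar M\}=1$, the commutativity $[\L,M-\bar M\}=0$, and the standard projection identities. If anything, your plan makes explicit the bookkeeping (the projection identity and the origin of the coefficient $km-nl$) that the paper's proof asserts in a single step, so no further changes are needed.
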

\begin{proof}
 By using the additional flows, we get
\begin{eqnarray*}
\epsilon^2[D_{t^*_{m,l}},D_{t^*_{n,k}}\}W&=&
\epsilon^2D_{t^*_{m,l}}(D_{t^*_{n,k}}W)-
\epsilon^2D_{t^*_{n,k}}(D_{t^*_{m,l}}W)\\
&=&-\epsilon D_{t^*_{m,l}}\left(((M-\bar M)^n\L^k)_{-}W\right)
+\epsilon D_{t^*_{n,k}}\left(((M-\bar M)^m\L^l)_{-}W\right)\\
&=&-\epsilon(D_{t^*_{m,l}}
(M-\bar M)^n\L^k)_{-}W-\epsilon((M-\bar M)^n\L^k)_{-}(D_{t^*_{m,l}} W)\\
&&+ \epsilon(D_{t^*_{n,k}} (M-\bar M)^m\L^l)_{-}W+
\epsilon((M-\bar M)^m\L^l)_{-}(D_{t^*_{n,k}} W).
\end{eqnarray*}
We further get
 \begin{eqnarray*}&&
\epsilon^2[D_{t^*_{m,l}},D_{t^*_{n,k}}\}W\\
&=&-\epsilon\Big[\sum_{p=0}^{n-1}
(M-\bar M)^p(D_{t^*_{m,l}}(M-\bar M))(M-\bar M)^{n-p-1}\L^k
+(M-\bar M)^n(D_{t^*_{m,l}}\L^k)\Big]_{-}W\\&&-((M-\bar M)^n\L^k)_{-}(D_{t^*_{m,l}} W)\\
&&+\epsilon\Big[\sum_{p=0}^{m-1}
(M-\bar M)^p(D_{t^*_{n,k}}(M-\bar M))(M-\bar M)^{m-p-1}\L^l
+(M-\bar M)^m(D_{t^*_{n,k}}\L^l)\Big]_{-}W\\&&+
\epsilon((M-\bar M)^m\L^l)_{-}(D_{t^*_{n,k}} W)\\
&=&[(nl-km)(M-\bar M)^{m+n-1}\L^{k+l-1}]_-W\\
&=&(km-nl)\epsilon D_{t^*_{m+n-1,k+l-1}}W.
\end{eqnarray*}
Similarly  the same results on acting on $\bar W$ and $\L$ are as follows
 \begin{eqnarray*}
\epsilon^2[D_{t^*_{m,l}},D_{t^*_{n,k}}\}\bar W
&=&((km-nl)(M-\bar M)^{m+n-1}\L^{k+l-1})_+\bar W\\
&=&(km-nl)\epsilon D_{t^*_{m+n-1,k+l-1}}\bar W,
\\[6pt]
\epsilon^2[D_{t^*_{m,l}},D_{t^*_{n,k}}\}\L&=&
\epsilon^2D_{t^*_{m,l}}(D_{t^*_{n,k}}\L)-
\epsilon^2D_{t^*_{n,k}}(D_{t^*_{m,l}}\L)\\
&=&[((nl-km)(M-\bar M)^{m+n-1}\L^{k+l-1})_-, \L\}\\
&=&(km-nl)\epsilon D_{t^*_{m+n-1,k+l-1}}\L.
\end{eqnarray*}
\end{proof}

\vskip 0.5truecm \noindent{\bf Acknowledgments.}

This work is supported by the National Natural Science Foundation of China under Grant No.  11571192 and K. C. Wong Magna Fund in
Ningbo University.

\end{document}